%
%
%
%

\documentclass{sig-alternate-05-2015}

\usepackage[usenames]{color}
\usepackage{epsfig}
\usepackage{epstopdf}
\usepackage{url}

\pdfpagewidth=8.5in
\pdfpageheight=11in

\newfont{\mycrnotice}{ptmr8t at 7pt}
\newfont{\myconfname}{ptmri8t at 7pt}
%
%


\setcopyright{acmcopyright}

\begin{document}

\title{Scaled VIP Algorithms for Joint Dynamic Forwarding and Caching in Named Data Networks}

\numberofauthors{2} 
%
\author{
%
%
\alignauthor Fan Lai,  Feng Qiu,  Wenjie Bian, Ying Cui\thanks{Y.~Cui gratefully acknowledges support from the National Science Foundation of China grant 61401272. }\\
       \affaddr{Shanghai Jiao Tong University}\\
      \affaddr{Shanghai, China}\\
       \alignauthor
       Edmund Yeh\thanks{E.~Yeh gratefully acknowledges support from the National Science Foundation Future Internet Architecture grant CNS-1205562 and a Cisco Systems research grant. }\\
       \affaddr{Northeastern University}\\
       \affaddr{Boston, MA, USA}\\
%
}

\maketitle

\newtheorem{Thm}{Theorem}
\newtheorem{Alg}{Algorithm}
\newtheorem{Def}{Definition}
\newtheorem{Rem}{Remark}

\allowdisplaybreaks
\begin{abstract}
Emerging Information-Centric Networking (ICN) architectures seek to optimally utilize both bandwidth and storage for efficient content distribution over the network.  The Virtual Interest Packet (VIP) framework has been proposed to enable joint design of forwarding and caching within the Named Data Networking (NDN) architecture.  The virtual plane of the VIP framework captures the measured demand for content objects, but does not reflect interest collapse and suppression in the NDN network.  We aim to further improve the  performance of the existing VIP algorithms by using a modified virtual plane where VIP counts are appropriately scaled to reflect interest suppression effects.  We characterize the stability region of the modified virtual plane with VIP scaling, develop a new distributed forwarding and caching algorithm operating on the scaled VIPs, and demonstrate the throughput optimality  of the scaled VIP algorithm in the virtual plane.
Numerical experiments demonstrate significantly enhanced performance relative to the existing VIP algorithm, as well as a number of other baseline algorithms.
\end{abstract}

\category{C.2.1}{Computer-Communication Networks}{Network Architecture and Design---network communications}

\terms{Theory, Design, Management}

\keywords{Named data networking, content centric networking, information centric networking, forwarding, caching, interest collapse}




\section{Introduction}


It is increasingly clear that traditional connection-based networking architectures are ill suited for the prevailing user demands for network content~\cite{Zhang10}.
Emerging Information-Centric Networking (ICN) architectures aim to remedy this fundamental mismatch so as to dramatically improve the efficiency of content dissemination over the Internet.
In particular, Named Data Networking (NDN)~\cite{Zhang10}, or Content-Centric Networking (CCN) \cite{Jacobson},
is a proposed network
architecture for the Internet that replaces the traditional client-server  communication model   with
one based on the identity of data or content.

 Content delivery in NDN is accomplished using {\em Interest Packets} and {\em Data Packets}, along with specific data structures in nodes such as the {\em Forwarding Information Base (FIB)}, the {\em Pending Interest Table (PIT)}, and the {\em Content Store} (cache).
 Communication is initiated by a data consumer or requester sending a request for the data using an \emph{Interest Packet}. Interest Packets are   forwarded along routes determined by the FIB at each node.
Repeated requests for the same object can be suppressed at each node according to its PIT.
The {\em Data Packet} is subsequently transmitted back along the path taken by the corresponding Interest Packet, as recorded by the PIT at each node.
A node may optionally cache the data objects contained in the  received Data Packets in its local {\em Content Store}.
Consequently, a request for a data object can be fulfilled not only by the content source but also by any node with a copy of that object in its cache.
Please see~\cite{Zhang10,VIPICN14} for details.

NDN seeks to optimally utilize both bandwidth and storage for efficient content distribution, which highlights the need for joint design of traffic engineering and caching strategies, in order to optimize network performance.
To address this fundamental problem, \cite{VIPICN14} proposes the {\em VIP framework}  for the design of high performing NDN networks.
 Within this VIP framework,  joint dynamic forwarding and caching  algorithms operating on virtual interest packets (VIPs)    are developed to maximize network stability in the virtual plane, using Lyapunov drift techniques \cite{VIPICN14}. The joint dynamic algorithms in \cite{VIPICN14} extend the existing  dynamic backpressure algorithm in the sense that they incorporate caching into the dynamic design.  Then, using the resulting flow rates and queue lengths of the VIPs in the virtual plane, \cite{VIPICN14} develops corresponding  joint dynamic  algorithms  in the actual plane, which have been shown to achieve superior performance in terms of   user delay and cache hit rates, relative to several baseline policies.


The virtual plane of the VIP framework in \cite{VIPICN14} focuses on capturing the measured demand for content objects in the network, but does not reflect interest collapse and suppression in the NDN network. In this paper, we aim to further improve the  performance of the existing VIP algorithms  in \cite{VIPICN14} by reflecting more accurately the actual interest packet traffic in the NDN network under interest suppression.   There are several potential challenges in pursuing this. First, it is not clear how one should improve the delay performance of the existing VIP algorithms by simultaneously capturing the network demand and the interest suppression effect, in a tractable manner. Second, it is not clear how to maintain the desired throughput optimality of
the existing VIP algorithms when the Lyapunov-drift-based control structure is modified to reflect interest suppression in the actual network.

 In the following, we shall
address the above questions and challenges.
We first  propose  a modified virtual plane where VIP counts are appropriated {\em scaled}.  In a simple manner, the scaled VIP counts capture to some extent the measured demand, but also reflect to some extent interest suppression effects.  We characterize the stability region of the modified virtual plane with VIP scaling, which is a superset of the stability region in \cite{VIPICN14} without VIP scaling.  We then develop a new distributed forwarding and caching algorithm operating on the scaled VIPs.  Generalizing Lyapunov drift techniques, we demonstrate the throughput optimality  of the scaled VIP algorithm in the virtual plane.  The scaled VIP algorithm generalizes the VIP algorithm  in \cite{VIPICN14}.
Numerical experiments   demonstrate the superior performance of the resulting stable VIP algorithm for handling  Interest Packets and Data Packets  within the actual plane, in terms of low network delay,  relative to  a number of baseline alternatives.

Although there is now a rapidly growing literature in ICN, the problem of optimal joint forwarding and caching for content-oriented networks remains challenging.  In \cite{RossiniICN14}, the authors demonstrate the gains of joint forwarding and caching in ICNs. 
In \cite{CachingTM2011:Ying},   the authors propose throughput optimal one-hop routing and caching   in a single-hop Content Distribution Network (CDN).

Throughput-optimal caching and routing in  multi-hop networks remains an open problem.
In \cite{TECC2012},  assuming the path between any two nodes is predetermined,
the authors consider single-path routing and caching to minimize link utilization for a general multi-hop content-oriented network.
The benefits of selective caching based on the concept of betweenness centrality, relative to ubiquitous caching, are shown in~\cite{Chai:2012:CLM:2342042.2342046}.
In~\cite{Age-based:6193504}, cooperative caching schemes  have been heuristically designed without being jointly optimized with forwarding strategies.  Finally, adaptive multi-path forwarding in NDN has been examined in~\cite{Yi:2012:AFN:2317307.2317319}, but has not been jointly optimized with caching.

\section{Network Model}\label{sec:model}

 We consider the same network model as in \cite{VIPICN14}, which we describe for completeness.
{Consider a connected multi-hop (wireline) network modeled by a directed graph $\mathcal G=(\mathcal N, \mathcal L)$, where $\mathcal N$ and $\mathcal L$ denote the sets of $N$ nodes and $L$ directed links, respectively.  Assume  that $(b,a) \in {\cal L}$ whenever $(a,b) \in {\cal L}$.  Let $C_{ab} > 0$ be the transmission capacity (in bits/second) of link $(a,b) \in {\cal L}$.  Let $L_n\geq 0$ be the cache size (in bits) at node $n \in {\cal N}$.


Assume that content in the network are identified as {\em data objects},  each consisting of multiple data chunks.
Content delivery in NDN operates at the level of data chunks.  That is, each Interest Packet requests a particular data chunk, and a matching Data Packet consists of the requested data chunk, the data chunk name, and a signature.    A request for a data object consists of a sequence of Interest Packets which request all the data chunks of the object.
We consider a set ${\cal K}$ of $K$ data objects, which may be determined by the amount of control state that the network is able to maintain, and
 may include only the most popular data objects in the network, typically responsible for most of the network congestion.\footnote{The less popular data objects not in ${\cal K}$ may be distributed using simple techniques such as shortest-path forwarding with little or no caching.}
For simplicity, we assume that
all data objects have the same size $D$ (in bits).   The results in the paper can be
extended to the more general case where object sizes differ.\footnote{For the general case with different object sizes, $C_{ba}/D$ in \eqref{eqn:rout_cost_sum} and \eqref{eqn:forwarding-VIP} are replaced with $C_{ba}$,  $A^k_n(t)$ in \eqref{eqn:queue_dyn} is replaced with $A^k_n(t)D^k$, and the constraint in \eqref{eqn:knapsack-VIP} is replaced with $\sum_{k\in \mathcal K} s^k_n D^k \leq L_n$, where $D^k$ denotes the size of object $k$.}   We consider
the scenario where  $L_n < KD$ for all $n \in
{\cal N}$.  Thus, no node can cache all data objects.
For each data object $k \in {\cal K}$, assume
that there is a unique node $src(k) \in {\cal N}$ which serves as the
content source for the object.    Interest Packets for chunks of a given data object
can enter the network at any node, and exit the network upon being satisfied
by matching Data Packets at the content source for the object, or at the
nodes which decide to cache the object.
For convenience, we assume that the content sources are fixed, while the caching points may vary in time.

\section{VIP Framework}\label{sec:vipframe}

\begin{figure}[t]
\begin{center}
\includegraphics[width=60mm,height=20mm]{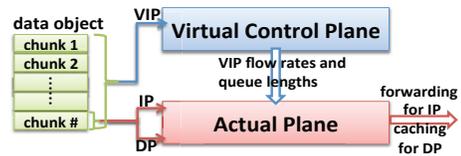}
\caption{VIP framework \cite{VIPICN14}.  IP (DP) stands for Interest Packet (Data Packet).}\label{fig:planes}
\end{center}
\end{figure}

We first review  the VIP framework proposed in \cite{VIPICN14} to facilitate the discussion of
the algorithms developed in later sections. Please refer to \cite{VIPICN14} for the details on the motivation and utility of this framework. As illustrated in Fig.~\ref{fig:planes},  the VIP framework relies on virtual interest packets (VIPs), which capture the {\em measured demand} for the respective data objects (i.e., represent content popularity which is empirically measured, rather than being given a priori). Note that the demand is unavailable in the interior of the actual network due to interest collapsing and suppression.
The VIP framework employs a \emph{virtual} control plane  operating on VIPs {\em at the data object level}, and an \emph{actual} plane  handling Interest Packets and Data Packets {\em at the data chunk level}.   The  virtual plane facilitates the design of  distributed control algorithms operating on VIPs, aimed at yielding desirable performance in terms of network metrics of concern,  by taking advantage of local information on network demand (as represented by the VIP counts).
The flow rates and queue lengths of the VIPs resulting from the control algorithm in the virtual plane are then used to specify the control algorithms in the actual plane \cite{VIPICN14}.

 While the VIP counts in \cite{VIPICN14} capture the measured demand for the respective data objects in the network, they do not reflect interest collapsing and suppression in the actual network.  To provide potentially better guidance for designing efficient control algorithms in the actual plane, we propose  a modified virtual plane operating on VIPs with dynamics different from those in \cite{VIPICN14}.
The VIP counts in the modified virtual plane capture to some extent the measured demand, but also reflect to some extent the actual interest packet traffic under interest collapse and suppression. 

We now specify the  modified dynamics of the VIPs within the  virtual plane.
Consider time slots of length 1 (without loss of generality) indexed by $t = 1, 2, \ldots$.  Specifically, time slot $t$ refers to the time interval $[t,t+1)$.
Within the virtual plane, each node $n\in {\cal N}$ maintains a separate VIP queue  for each data object $k \in {\cal K}$.
Note that no data is contained in these VIPs.  Thus, the VIP queue size for each node $n$ and data object $k$ at the beginning of slot $t$   is  represented by a {\em counter} $V_n^k(t)$.\footnote{\small{We assume that VIPs can be quantified as a real number.  This is reasonable when the VIP counts are large.}}
An exogenous request for data object $k$ is considered to have arrived at node $n$ if the Interest Packet requesting the starting chunk of data object $k$ has arrived at node $n$.
Let $A^k_n(t)$ be the number of exogenous data object request arrivals at node $n$ for object $k$ during slot $t$.\footnote{\small{We think of a node  as an aggregation  point  combining many network users, 
and hence it is likely to submit many requests for a data object over time.}}
For every arriving request for data object $k$ at node $n$, a corresponding VIP for object $k$ is generated at $n$.
The long-term exogenous VIP arrival rate at node $n$ for object $k$ is
$ \lambda_n^k\triangleq \mathbb \limsup_{t \rightarrow \infty} \frac{1}{t} \sum_{\tau = 1}^t A^k_n(\tau).$
Let $\mu_{ab}^k(t)  \geq 0$ be the allocated transmission rate of VIPs for data object $k$ over link $(a,b)$ during time slot $t$.  Note that
a single message between node $a$ and node $b$ can summarize all the VIP transmissions during  each slot. 
Data Packets for the requested data object must travel on the reverse path taken by the Interest Packets.  Thus, in determining the transmission of the VIPs, we  consider the link capacities on the reverse path below:
\begin{align}
&\sum_{k \in \mathcal K} {\mu^{k}_{ab}(t)} \leq C_{ba}/D,\
\text{for~all}~(a,b) \in \mathcal L\label{eqn:rout_cost_sum}\\
&\mu^{k}_{ab}(t)=0, \;\text{for~all}~(a,b)\not\in \mathcal
L^{k}\label{eqn:rout_cost_non_neg}
\end{align}
where $C_{ba}$ is the capacity of ``reverse" link $(b,a)$ and  $\mathcal L^k$ is the set of $L^k$ links which are allowed to transmit the VIPs of object $k$.  Let $C_{\max}\triangleq \max_{(a,b)\in\mathcal L}C_{ab}/D$.

In the virtual plane, we may assume that at each slot $t$, each node $n \in {\cal N}$ can gain access to any data object $k \in {\cal K}$ for which there is interest at $n$, and potentially cache the object locally.
Let $s_n^{ k}(t) \in \{0,1\}$ represent the caching state for object $k$ at node $n$ during slot $t$, where $s_n^{k}(t)=1$ if object $k$  is cached at node $n$ during slot $t$, and $s_n^{k}(t)=0$ otherwise.  Note that even if $s_n^{k}(t)=1$, the content store at node $n$ can satisfy only a limited number of VIPs during one time slot.  This is because there is a maximum rate $r_n$ (in objects per slot) at which node $n$ can produce copies of cached object $k$ \cite{VIPICN14}.


In contrast to \cite{VIPICN14}, in order to reflect network demand as well as interest  suppression in the actual network in a simple manner, we consider {\em scaled VIP counts} in the virtual plane.  Specifically, for object $k$ at node $n$, we scale down the total exogenous and endogenous VIP arrivals $A^k_n(t)+
\sum_{a\in \mathcal N}\mu^{k}_{an}(t)$ at slot $t$ by a constant $\theta_n^k\geq 1$. Note that $\theta_n^k$ is a design parameter for the virtual plane, and can be chosen to reflect the average interest suppression effect  within a slot for object $k$ at node $n$.

Initially, all VIP counters are set to 0, i.e., $V_n^k(1)=0$.
The time evolution of the VIP count at node $n$ for object $k$ is as follows:
\begin{small}
\begin{align}
& V^k_n(t+1) \leq \nonumber\\
&   \left(
\left(V^k_n(t)-\sum_{b\in \mathcal N}\mu^{k}_{nb}(t)\right)^+ +\dfrac{A^k_n(t)+
\sum_{a\in \mathcal N}\mu^{k}_{an}(t)}{\theta_n^k}- r_n s_n^{k}(t)\right)^+
\label{eqn:queue_dyn}
\end{align}
\end{small}
where $(x)^+ \triangleq \max(x,0)$. Note that when $\theta_n^k=1$ for all $k\in\mathcal K$ and $n\in\mathcal N$, the dynamics of VIPs in \eqref{eqn:queue_dyn} reduces to the VIP dynamics in \cite{VIPICN14}. Thus, VIP counts in \cite{VIPICN14} capture the  demand for the respective data objects without interest suppression. By setting $\theta_n^k\geq 1$ for all $k\in\mathcal K$ and $n\in\mathcal N$, VIP counts here can reflect the demand for the respective data objects with interest scaling.

From~\eqref{eqn:queue_dyn}, it can be seen that the VIPs for data object $k$ at node $n$ at the beginning of slot $t$ are transmitted during slot $t$ at the rate $\sum_{b\in \mathcal N}\mu^{k}_{nb}(t)$.  The exogenous and endogenous VIP arrivals  $A^k_n(t)+
\sum_{a\in \mathcal N}\mu^{k}_{an}(t)$ during slot $t$ are scaled down by $\theta_n^k$.
The remaining VIPs $(V^k_n(t)-\sum_{b\in \mathcal N}\mu^{k}_{nb}(t))^+$, as well as the scaled-down  exogenous and endogenous VIP arrivals $ (A^k_n(t)+
\sum_{a\in \mathcal N}\mu^{k}_{an}(t))/{\theta_n^k}$ during slot $t$, are reduced by $r_n$ at the end of slot $t$ if object $k$ is cached at node $n$ in slot $t$ ($s_n^k(t) = 1$). The VIPs still remaining are then transmitted during the next slot $t+1$.
Note that~\eqref{eqn:queue_dyn} is an inequality because the actual number of VIPs for object $k$
arriving to node $n$ during slot $t$ may be less than $\sum_{a\in \mathcal N}\mu^{k}_{an}(t)$ if the neighboring nodes have little or
no VIPs of object $k$ to transmit.
Furthermore,
$V^k_n(t) = 0$ for all $t \geq 1$ if $n=src(k)$.  In other words, the VIPs for an object exit the network once they reach the source node of the object.
Physically, the VIP count can be interpreted as a {\em potential}.  For any  data object, there is a downward ``gradient" from entry points of the data object requests to the content source and caching nodes.


The VIP queue at node $n$ is  {\em stable} if
$$ \limsup_{t \rightarrow \infty}
\frac{1}{t} \sum_{\tau = 1}^{t} 1_{[V^k_{n}(\tau) > \xi]} d\tau
\rightarrow 0 \;\; \text{as} \;\; \xi \rightarrow \infty,$$ where $1_{\{\cdot\}}$
is the indicator function. The {\em VIP network stability region} $\Lambda$ is the closure
of the set of all VIP arrival rates $\boldsymbol \lambda\triangleq (\lambda^k_n)_{n \in {\cal N},k \in {\cal K}}$ for which there exists some feasible (i.e., satisfying~\eqref{eqn:rout_cost_sum}-\eqref{eqn:rout_cost_non_neg} and the cache size limits $(L_n)_{n \in {\cal N}}$)  joint forwarding and caching policy which can guarantee that all VIP queues are stable \cite{VIPICN14}.
Assume (i) the VIP arrival processes $\{A^k_n(t): t=1,2,\ldots\}$  are mutually independent with respect to $n$ and $k$; (ii)
for all $n $ and $k  $, $\{A^k_n(t): t = 1, 2, \ldots\}$ are i.i.d. with respect to $t$; and (iii) for all $n$ and $k$, $A^k_n(t)\leq A^k_{n,\max}$ for all $t$.
The theoretical results in this paper  hold under these assumptions.
We now characterize the VIP stability region in the modified virtual plane (with VIP scaling).

\begin{Thm} [Scaled VIP Stability Region] The $ $ $ $ VIP stability region of the network ${\cal G} = ({\cal N}, {\cal L})$ with link capacity constraints~\eqref{eqn:rout_cost_sum}-\eqref{eqn:rout_cost_non_neg}, and with VIP queue evolution~\eqref{eqn:queue_dyn}, is the set $\Lambda$ consisting of all  $(\lambda_n^k)_{k \in {\cal K}, n \in {\cal N}}$  such that there exist flow variables $(f_{ab}^k)_{k \in {\cal K}, (a,b) \in {\cal L}}$ and storage variables  $(\beta_{n,i,l})_{ n \in {\cal N}; i=1,\cdots, {K \choose l};\  l=0,\cdots, i_n \triangleq \lfloor L_n/D\rfloor}$
satisfying
\begin{align}
&f_{ab}^k\geq 0,  f_{nn}^k=0,   f_{src(k)n}^k=0, \  \forall a,b, n\in \mathcal N,  k\in \mathcal K\label{eqn:stability-region-f1}\\
&f_{ab}^k=0, \ \forall a,b\in \mathcal N, \ k\in \mathcal K,\  (a,b)\not \in \mathcal L^k\label{eqn:stability-region-f2}\\
&0\leq \beta_{n,i,l}\leq 1, \; i=1,\cdots, {K \choose l},\ l=0,\cdots, i_n, \ n\in \mathcal N\label{eqn:stability-region-beta}\\
&\frac{\lambda_n^k}{\theta_n^k}\leq  \sum_{b\in \mathcal N}f^k_{nb}-\frac{\sum_{a\in \mathcal N}f^k_{an}}{\theta_n^k}+ r_n\sum_{l=0}^{i_n}\sum_{i=1}^{{K \choose l}}\beta_{n,i,l}\mathbf 1[k\in \mathcal B_{n,i,l}], \nonumber \\
&\hspace{30mm} \forall n\in \mathcal N,\ k\in \mathcal K, n\neq src(k) \label{eq:sink} \\
&\sum_{k\in \mathcal K} f^k_{ab}\leq C_{ba}/D, \ \forall (a,b)\in \mathcal L\label{eqn:stability-region-capacity}\\
&\sum_{l=0}^{i_n}\sum_{i=1}^{{K \choose l}} \beta_{n,i,l}=1, \ \forall n\in \mathcal N\label{eqn:stability-region-cache}
\end{align}
Here,
$\mathcal B_{n,i,l}$ denotes the caching set consisting of the $i$-th combination of $l$ data objects out of $K$ data objects at node $n$, where $i=1,\cdots, {K \choose l}$, $l=0,\cdots, i_n \triangleq\lfloor L_n/D\rfloor$.
\label{thm:stability}
\end{Thm}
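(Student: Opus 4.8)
The plan is to prove the two inclusions that identify the operationally defined $\Lambda$ (closure of the stabilizable arrival rates) with the polyhedral set described by \eqref{eqn:stability-region-f1}--\eqref{eqn:stability-region-cache}, call it $\Lambda^{\mathrm{LP}}$, following the standard Lyapunov-drift program for multi-hop stability regions (cf.\ \cite{VIPICN14}) while carrying the per-node scaling factors $\theta_n^k$ through every flow-conservation and drift computation.

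\textbf{Necessity ($\Lambda\subseteq\Lambda^{\mathrm{LP}}$).} I would first show that any rate vector admitting a feasible policy under which all VIP queues are stable lies in $\Lambda^{\mathrm{LP}}$. Fix such a policy and let $\tilde\mu^k_{ab}(t)\le\mu^k_{ab}(t)$ be the number of object-$k$ VIPs actually forwarded on $(a,b)$ in slot $t$; since $\sum_b\tilde\mu^k_{nb}(t)\le V^k_n(t)$, the inner projection in \eqref{eqn:queue_dyn} is inactive and the exact recursion gives $V^k_n(t+1)\ge V^k_n(t)-\sum_b\tilde\mu^k_{nb}(t)+\bigl(A^k_n(t)+\sum_a\tilde\mu^k_{an}(t)\bigr)/\theta^k_n-r_n s^k_n(t)$. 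Telescoping over $\tau=1,\dots,t$ and dividing by $t$, I would pass to a limit using three ingredients: (a) the weak stability notion of the theorem together with the uniformly bounded increments of $V^k_n$ forces $V^k_n(t)/t\to 0$ (if $\limsup_t V^k_n(t)/t>0$, bounded increments keep $V^k_n$ above any fixed threshold on a positive fraction of a diverging sequence of windows, contradicting stability); (b) the i.i.d.\ assumption gives $\tfrac{1}{t}\sum_\tau A^k_n(\tau)\to\mathbb E[A^k_n(1)]=\lambda^k_n$ a.s.; (c) Bolzano--Weierstrass provides a subsequence $t_j\to\infty$ along which all the finitely many bounded averages $\tfrac{1}{t_j}\sum_\tau\tilde\mu^k_{ab}(\tau)$ and $\tfrac{1}{t_j}\sum_\tau\mathbf 1[\text{node }n\text{ holds caching set }\mathcal B_{n,i,l}\text{ in }\tau]$ converge. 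Defining $f^k_{ab}$ and $\beta_{n,i,l}$ as these limits, \eqref{eqn:stability-region-f1}--\eqref{eqn:stability-region-f2} hold because $src(k)$ forwards no VIPs of $k$ and nothing is forwarded off $\mathcal L^k$; \eqref{eqn:stability-region-beta} and \eqref{eqn:stability-region-cache} hold because exactly one caching configuration is in force each slot; \eqref{eqn:stability-region-capacity} follows from $\sum_k\tilde\mu^k_{ab}(t)\le\sum_k\mu^k_{ab}(t)\le C_{ba}/D$; and \eqref{eq:sink} is precisely the limiting inequality after rearrangement. Since $\Lambda^{\mathrm{LP}}$ is closed, taking the closure that defines $\Lambda$ preserves membership.

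\textbf{Sufficiency ($\Lambda^{\mathrm{LP}}\subseteq\Lambda$).} Given $\boldsymbol\lambda\in\Lambda^{\mathrm{LP}}$ witnessed by $(f^k_{ab}),(\beta_{n,i,l})$, note that for every $\boldsymbol\epsilon>\mathbf 0$ the vector $\boldsymbol\lambda-\boldsymbol\epsilon$ is witnessed by the \emph{same} $(f,\beta)$, now with strict slack in \eqref{eq:sink}; since $\Lambda$ is closed it therefore suffices to stabilize every such perturbed vector. For this I would use the stationary randomized policy that sets $\mu^k_{ab}(t)\equiv f^k_{ab}$ (feasible by \eqref{eqn:stability-region-capacity} and \eqref{eqn:stability-region-f2}) and, independently across slots, has node $n$ adopt caching set $\mathcal B_{n,i,l}$ with probability $\beta_{n,i,l}$ (a valid distribution by \eqref{eqn:stability-region-beta} and \eqref{eqn:stability-region-cache}, each set of size at most $\lfloor L_n/D\rfloor$). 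With $\mathcal L(t)=\sum_{n,k}(V^k_n(t))^2$, squaring \eqref{eqn:queue_dyn} and using the standard manipulations of the projection operators together with the boundedness of $A^k_n$, of the rates $\mu^k_{ab}(t)$, and of $r_n$, yields a drift bound $\mathbb E[\mathcal L(t+1)-\mathcal L(t)\mid \mathbf V(t)]\le B-2\sum_{n,k}V^k_n(t)\bigl(\sum_b f^k_{nb}-(\lambda^k_n+\sum_a f^k_{an})/\theta^k_n+r_n\sum_{l,i}\beta_{n,i,l}\mathbf 1[k\in\mathcal B_{n,i,l}]\bigr)$, where $B$ is finite and depends on the $\theta^k_n$ only through the scaled arrival terms $A^k_{n,\max}/\theta^k_n$. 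By the strict slack this is at most $B-2\epsilon'\sum_{n,k}V^k_n(t)$ for some $\epsilon'>0$, and the usual Lyapunov stability lemma gives $\limsup_t\tfrac{1}{t}\sum_{\tau=1}^t\sum_{n,k}\mathbb E[V^k_n(\tau)]\le B/(2\epsilon')<\infty$, which implies the stability notion of the theorem. Hence $\boldsymbol\lambda-\boldsymbol\epsilon\in\Lambda$ for all $\boldsymbol\epsilon>\mathbf 0$, and letting $\boldsymbol\epsilon\to\mathbf 0$ gives $\boldsymbol\lambda\in\Lambda$.

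\textbf{Main obstacle.} The delicate point, in both directions, is the asymmetric role of $\theta^k_n$: in \eqref{eqn:queue_dyn} the inner projection $(\cdot)^+$ acts only on the outgoing flow $\sum_b\mu^k_{nb}(t)$, whereas the exogenous-plus-endogenous arrivals $A^k_n(t)+\sum_a\mu^k_{an}(t)$ are divided by $\theta^k_n$ \emph{outside} that projection. Redoing the ``$(\cdot)^+$-squaring'' bookkeeping so that the cross term in the drift collapses to exactly the weighted expression appearing in \eqref{eq:sink}---with the same $\theta^k_n$ attached to the inbound terms and no spurious $\theta$-factor multiplying $V^k_n(t)$---is what makes the characterization tight, and the analogous care in tracking the gap between $\mu^k_{ab}$ and the actual flows $\tilde\mu^k_{ab}$ through the nested projections is what is needed for the limiting flow-conservation inequality in the necessity direction.
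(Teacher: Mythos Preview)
Your proposal is correct and follows the same two-inclusion program as the paper, with only tactical differences. For necessity, the paper invokes Lemma~1 of \cite{Neely-Modiano-Rohrs:2005} (stability $\Rightarrow$ queues are bounded by some $M$ infinitely often), picks a single slot $\tilde t$ where the queues are small and the empirical arrival averages are close to $\boldsymbol\lambda$, and reads off the flow/storage variables from the time averages up to $\tilde t$; you instead argue $V^k_n(t)/t\to 0$ directly from bounded increments and extract $(f,\beta)$ via a Bolzano--Weierstrass subsequence. For sufficiency, the paper builds a \emph{randomized} single-object-per-link forwarding policy (pick object $k$ on $(a,b)$ with probability $f^k_{ab}/\sum_{k'}f^{k'}_{ab}$ and send at the aggregate rate) together with randomized caching, then appeals to Loynes' theorem; you instead run the \emph{deterministic} constant-rate policy $\mu^k_{ab}(t)\equiv f^k_{ab}$ with the same randomized caching and close with a quadratic Lyapunov drift bound. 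Both pairs of choices are standard and interchangeable here; your deterministic forwarding is a bit simpler since feasibility $\sum_k f^k_{ab}\le C_{ba}/D$ is immediate, while the paper's Loynes invocation avoids redoing the squaring algebra already carried out for Theorem~\ref{Thm:thpt-opt}.
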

\begin{proof} Please refer to Appendix A. 
\end{proof}

To interpret Theorem~\ref{thm:stability}, note that the flow variable $f_{ab}^k$ represents the long-term VIP flow rate for data object $k$ over link $(a,b)$.  The
storage variable $\beta_{n,i,l}$ represents the long-term fraction of time that the set $\mathcal B_{n,i,l}$ (the $i$-th combination of $l$ data objects out of $K$ data objects) is cached at node $n$.  Inequality~\eqref{eq:sink} states that the  scaled-down exogenous VIP arrival rate for data object $k$ at node $n$ is upper bounded by the total long-term outgoing VIP flow rate minus the total  scaled-down endogenous long-term incoming VIP flow rate, plus the long-term VIP flow rate which is absorbed by all possible caching sets containing data object $k$ at node $n$, weighted by the fraction of time each caching set is used.  It is easy to see that the stability region in Theorem~\ref{thm:stability} (with VIP scaling) is a superset of the stability region  in Theorem~2 of \cite{VIPICN14} (without VIP scaling).  It is also clear that the stability region in Theorem~\ref{thm:stability} becomes larger when $\theta_n^k$ increases for all $k\in\mathcal K$ and $n\in\mathcal N$.    Note that when $\theta_n^k=1$ for all $k\in\mathcal K$ and $n\in\mathcal N$, the VIP stability region in Theorem~\ref{thm:stability} reduces to the VIP stability region in Theorem~2 of \cite{VIPICN14}.


\section{Throughput Optimal VIP Control}\label{sec:forwarding-caching-VIP}

First, we  present a  new    joint dynamic  forwarding and caching algorithm for VIPs in the virtual plane when $\boldsymbol \lambda\in \text{int}(\Lambda)$.
\begin{Alg}[Scaled Forwarding and Caching]  At the beginning of each slot  $t$, observe the VIP counts $\mathbf V(t)\triangleq (V^k_n(t))_{ n \in \mathcal N,k\in \mathcal K}$ and perform forwarding and caching in the virtual plane as follows.

\textbf{Forwarding}: For each data object $k \in {\cal K}$ and each link $(a,b)\in \mathcal
L^k$,  choose
\begin{align}
\mu^{k}_{ab}(t)
=&\begin{cases} C_{ba}/D,
&  W^*_{ab}(t)>0\  \text{and}\ k=k^*_{ab}(t)\\
0, &\text{otherwise}
\end{cases}\label{eqn:forwarding-VIP}
\end{align}
$W_{ab}^{k}(t) \triangleq  V^k_a(t) - \frac{V^k_b(t)}{\theta_b^k}$,   $W^*_{ab}(t)
\triangleq \left(W_{ab}^{k^*_{ab}(t)}(t)\right)^+$, and $k^*_{ab}(t) \triangleq \arg
\max_{k\in\{k: (a,b)\in \mathcal L^{k}\}} W_{ab}^{k}(t)$.
Here,
$W_{ab}^{k}(t)$ is the backpressure weight of object $k$ on link $(a,b)$ at slot $t$, and
$k^*_{ab}(t)$ is the data object which maximizes the backpressure weight on link $(a,b)$ at time $t$.

\textbf{Caching}:  At each node $n \in \mathcal N$, choose $(s^k_n(t))_{n\in\mathcal N, k\in\mathcal K}$ to
\begin{equation}
\max \sum_{k\in \mathcal K} V^k_n(t) s^k_n \quad
\text{s.t.} \ \sum_{k\in \mathcal K} s^k_n\leq L_n/D.
\label{eqn:knapsack-VIP}
\end{equation}
 Here, $V^k_n(t)$ serves as the  caching weight of object $k$ at node $n$.

Based on the forwarding and caching in \eqref{eqn:forwarding-VIP} and \eqref{eqn:knapsack-VIP}, the VIP count is updated according to \eqref{eqn:queue_dyn}.
\label{Alg:VIP}
\end{Alg}


 Note that when $\theta_n^k=1$ for all $k\in\mathcal K$ and $n\in\mathcal N$, Algorithm~\ref{Alg:VIP} reduces to Algorithm~1 in \cite{VIPICN14}. The computation complexity of Algorithm~\ref{Alg:VIP} has the same order as that of  Algorithm~1 in \cite{VIPICN14}. Later, in Section~\ref{sec:simulations}, we shall see that Algorithm~\ref{Alg:VIP} yields superior delay performance to  Algorithm~1 in \cite{VIPICN14}.
The forwarding part in Algorithm~\ref{Alg:VIP} is different from that in Algorithm~1 in \cite{VIPICN14}.
At each slot $t$ and for each link $(a,b)$, the scaled backpressure-based forwarding algorithm allocates the entire normalized ``reverse" link capacity $C_{ba}/D$ to transmit the VIPs for the data object $k^*_{ab}(t)$ which maximizes the backpressure $W_{ab}^{k}(t) $.   The forwarding algorithm captures to some extent the interest suppression at receiving node $b$.  
The caching part in Algorithm~\ref{Alg:VIP} is the same as that in Algorithm~1 in \cite{VIPICN14}.   The  max-weight  caching algorithm  implements the optimal solution to the max-weight knapsack problem in \eqref{eqn:knapsack-VIP}, i.e., allocate cache space at node $n$ to the $\lfloor L_n/D\rfloor$ objects with the largest  caching weights $V^k_n(t)$.
The scaled forwarding and caching algorithm maximally balances out the scaled VIP counts by joint forwarding and caching, in order to prevent congestion building up in any part of the network, thereby reducing delay.
 Similar to Algorithm~1 in \cite{VIPICN14}, Algorithm~\ref{Alg:VIP} can be implemented in a distributed manner.

We now show that   Algorithm~\ref{Alg:VIP} adaptively stabilizes all VIP queues for any ${\boldsymbol \lambda} \in {\rm
int}({\Lambda})$, without knowing ${\boldsymbol \lambda}$. Thus, Algorithm~\ref{Alg:VIP} is {\em throughput optimal}, in the sense of adaptively maximizing the VIP throughput, and therefore the user demand rate satisfied by the network.

\begin{Thm} [Throughput Optimality] If there exists  $\boldsymbol \epsilon=(\epsilon_n^k)_{n \in {\cal N}, k \in {\cal K}} \succ \mathbf 0$
such
that $\boldsymbol \lambda+\boldsymbol \epsilon \in \Lambda$,
then the network of VIP queues under
Algorithm \ref{Alg:VIP} satisfies
\begin{align}
\limsup_{t\to\infty}\frac{1}{t}\sum_{\tau=1}^{t}\sum_{n\in \mathcal N,k\in \mathcal K} \mathbb
E[V^k_n(\tau)]\leq \frac{N B}{\epsilon}\label{eqn:enhanced-DBP-VB}
\end{align}
where
$ B\triangleq \frac{1}{2N}\sum_{n\in \mathcal N}\big((\mu^{out}_{n,
\max})^2+\frac{A_{n,\max}+\mu^{in}_{n,\max}}{\theta_{n,min}}+r_{n,\max})^2
+2\mu^{out}_{n,
\max}r_{n,\max}\big)$,
$\epsilon\triangleq \min_{ n\in \mathcal
N,k\in \mathcal K} \epsilon_n^k$,
$\mu^{in}_{n, \max}\triangleq \sum_{a\in \mathcal N} C_{an}/D $, $\mu^{out}_{n, \max}\triangleq
\sum_{b\in \mathcal N} C_{nb}/D$, $\theta_{n,\min}  \triangleq  \min_{k\in \mathcal K} \theta_n^k$, $
A_{n,\max} \triangleq \sum_{k\in \mathcal K}
A^k_{n,\max}$, and $r_{n,\max}= K r_n$.\label{Thm:thpt-opt}
\end{Thm}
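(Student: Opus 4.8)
The plan is to run the Lyapunov-drift (backpressure) argument of \cite{VIPICN14}, but carrying the asymmetric scaling factors $\theta_n^k$ through every step. Take the quadratic Lyapunov function $L(\mathbf V(t))\triangleq\sum_{n\in\mathcal N,k\in\mathcal K}(V_n^k(t))^2$. First I would square the VIP update \eqref{eqn:queue_dyn} (using the allocated rates as an upper bound, which is valid by monotonicity of $(\cdot)^+$ since \eqref{eqn:queue_dyn} is an inequality): with the elementary facts $((x)^+)^2\le x^2$, $((Q-b)^+)^2\le Q^2-2Qb+b^2$, $(Q-b)^+\le Q$, and $(Q-b)^+\ge Q-b$ (the last needed to extract the caching term from the cross product $-2(V_n^k(t)-\sum_b\mu^k_{nb}(t))^+\, r_n s_n^k(t)$), one obtains a per-queue inequality of the form
\[
(V_n^k(t+1))^2-(V_n^k(t))^2\le \eta_n^k(t)-2V_n^k(t)\Big(\sum_{b}\mu^k_{nb}(t)-\tfrac{A_n^k(t)+\sum_{a}\mu^k_{an}(t)}{\theta_n^k}+r_n s_n^k(t)\Big),
\]
where the second-order terms $\eta_n^k(t)$ are bounded --- via \eqref{eqn:rout_cost_sum}, $A_n^k(t)\le A^k_{n,\max}$, $\theta_n^k\ge\theta_{n,\min}$, and $\sum_k s_n^k(t)\le L_n/D$ --- so that $\sum_{n,k}\eta_n^k(t)\le 2NB$ with $B$ the stated constant. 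Summing over $n,k$, taking $\mathbb E[\cdot\mid\mathbf V(t)]$, and using $\mathbb E[A_n^k(t)\mid\mathbf V(t)]=\lambda_n^k$ (by assumptions (i)--(ii)) gives the one-slot drift bound $\Delta(t)\triangleq\mathbb E[L(\mathbf V(t+1))-L(\mathbf V(t))\mid\mathbf V(t)]\le 2NB-2\sum_{n,k}V_n^k(t)\big(\mathbb E[\sum_b\mu^k_{nb}(t)\mid\mathbf V(t)]-\tfrac{\lambda_n^k+\mathbb E[\sum_a\mu^k_{an}(t)\mid\mathbf V(t)]}{\theta_n^k}+r_n\mathbb E[s_n^k(t)\mid\mathbf V(t)]\big)$.

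Next I would identify the control-dependent part of the drift with the objectives optimized by Algorithm \ref{Alg:VIP}. A rearrangement of the double sum over links gives the identity
\[
\sum_{n,k}V_n^k(t)\Big(\sum_{b}\mu^k_{nb}(t)-\tfrac{1}{\theta_n^k}\sum_{a}\mu^k_{an}(t)\Big)=\sum_{(a,b)\in\mathcal L}\sum_{k}\mu^k_{ab}(t)\,W^k_{ab}(t),
\]
so the forwarding rule \eqref{eqn:forwarding-VIP} --- putting all of $C_{ba}/D$ on $k^*_{ab}(t)$ when $W^*_{ab}(t)>0$ --- exactly maximizes $\sum_{(a,b)}\sum_k\mu^k_{ab}(t)W^k_{ab}(t)$ over all $\mu$ feasible for \eqref{eqn:rout_cost_sum}--\eqref{eqn:rout_cost_non_neg}, attaining $\sum_{(a,b)}(C_{ba}/D)W^*_{ab}(t)$. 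Likewise, the knapsack caching rule \eqref{eqn:knapsack-VIP} maximizes $\sum_n r_n\sum_k V_n^k(t)s_n^k$ over $\{0,1\}$-vectors with $\sum_k s_n^k\le L_n/D$. These two optimizations are decoupled and each enters the drift bound with a $-2(\cdot)$, so the drift bound under Algorithm \ref{Alg:VIP} is no larger than the same bound evaluated at any other feasible controls.

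I would then plug in the flow and storage variables supplied by Theorem \ref{thm:stability}: since $\boldsymbol\lambda+\boldsymbol\epsilon\in\Lambda$, there exist $(f^k_{ab})$ and $(\beta_{n,i,l})$ satisfying \eqref{eqn:stability-region-f1}--\eqref{eqn:stability-region-cache} with $\lambda_n^k$ replaced by $\lambda_n^k+\epsilon_n^k$. Here $(f^k_{ab})$ is feasible for \eqref{eqn:rout_cost_sum}--\eqref{eqn:rout_cost_non_neg} by \eqref{eqn:stability-region-f2},\eqref{eqn:stability-region-capacity}, while $\sum_{l,i}\beta_{n,i,l}\mathbf 1[k\in\mathcal B_{n,i,l}]$ is a convex combination (by \eqref{eqn:stability-region-cache}) of caching-set weights each corresponding to caching at most $i_n=\lfloor L_n/D\rfloor$ objects, hence no larger than the knapsack optimum. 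Using these two comparisons and undoing the rearrangement above, $\Delta(t)\le 2NB-2\sum_{n,k}V_n^k(t)\big(\sum_b f^k_{nb}-\tfrac{\sum_a f^k_{an}}{\theta_n^k}+r_n\sum_{l,i}\beta_{n,i,l}\mathbf 1[k\in\mathcal B_{n,i,l}]-\tfrac{\lambda_n^k}{\theta_n^k}\big)$. The terms with $n=src(k)$ vanish since $V^k_{src(k)}(t)\equiv 0$, and for $n\neq src(k)$ the bracket is $\ge\tfrac{\lambda_n^k+\epsilon_n^k}{\theta_n^k}-\tfrac{\lambda_n^k}{\theta_n^k}=\tfrac{\epsilon_n^k}{\theta_n^k}$ by \eqref{eq:sink}; thus $\Delta(t)\le 2NB-2\sum_{n,k}V_n^k(t)\,\tfrac{\epsilon_n^k}{\theta_n^k}$. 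Finally I would take total expectations, sum over $\tau=1,\dots,t$, telescope using $L(\mathbf V(1))=0$ and $L\ge 0$, divide by $t$, and let $t\to\infty$; since $\epsilon_n^k/\theta_n^k$ is uniformly bounded below by a positive constant, this yields \eqref{eqn:enhanced-DBP-VB}.

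The step I expect to need the most care is the first: propagating the one-sided scaling through the square while keeping the right sign on the caching contribution, and recognizing that the endogenous arrivals $\tfrac{1}{\theta_n^k}\sum_a\mu^k_{an}(t)$ feed both into the $O(1)$ noise constant $B$ and into the backpressure weight $W^k_{ab}(t)=V_a^k(t)-V_b^k(t)/\theta_b^k$ on which the forwarding rule \eqref{eqn:forwarding-VIP} acts. The only other point requiring verification --- but routine --- is that Algorithm \ref{Alg:VIP}'s decoupled max-weight forwarding and max-weight caching jointly minimize the drift bound; once the rearrangement identity above is in hand, the remainder parallels the $\theta_n^k\equiv 1$ argument of \cite{VIPICN14}.
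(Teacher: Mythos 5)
Your proposal is correct and follows essentially the same route as the paper's Appendix~B: quadratic Lyapunov drift, squaring the scaled queue update, the link-rearrangement identity that produces the scaled backpressure weights $W^k_{ab}(t)=V^k_a(t)-V^k_b(t)/\theta_b^k$, the observation that Algorithm~\ref{Alg:VIP} minimizes the drift bound over all feasible forwarding/caching controls, and a comparison against the flow and storage variables guaranteed by $\boldsymbol\lambda+\boldsymbol\epsilon\in\Lambda$. The only cosmetic difference is that you substitute the deterministic variables $(f^k_{ab})$ and $(\beta_{n,i,l})$ directly, whereas the paper routes them through the stationary randomized policy built in the sufficiency part of Theorem~\ref{thm:stability}; these are equivalent since that policy's expected service rates equal those variables.
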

\begin{proof} Please refer to Appendix B.
\end{proof}
 The upper bound \eqref{eqn:enhanced-DBP-VB} on the average total number of VIPs is smaller than the upper bound in Theorem~1 of  \cite{VIPICN14}.  This is because during transmission within the modified virtual plane,  the number of VIPs  for an object is scaled down.  In addition, the upper bound  in \eqref{eqn:enhanced-DBP-VB} decreases  when $\theta_n^k$ increases for all $k\in\mathcal K$ and $n\in\mathcal N$.
Note that when $\theta_n^k=1$ for all $k\in\mathcal K$ and $n\in\mathcal N$, Theorem~\ref{Thm:thpt-opt} reduces to Theorem~1 in \cite{VIPICN14}.

\begin{figure*}[t]
\begin{center}
\includegraphics[width=17.4cm, height=4cm]{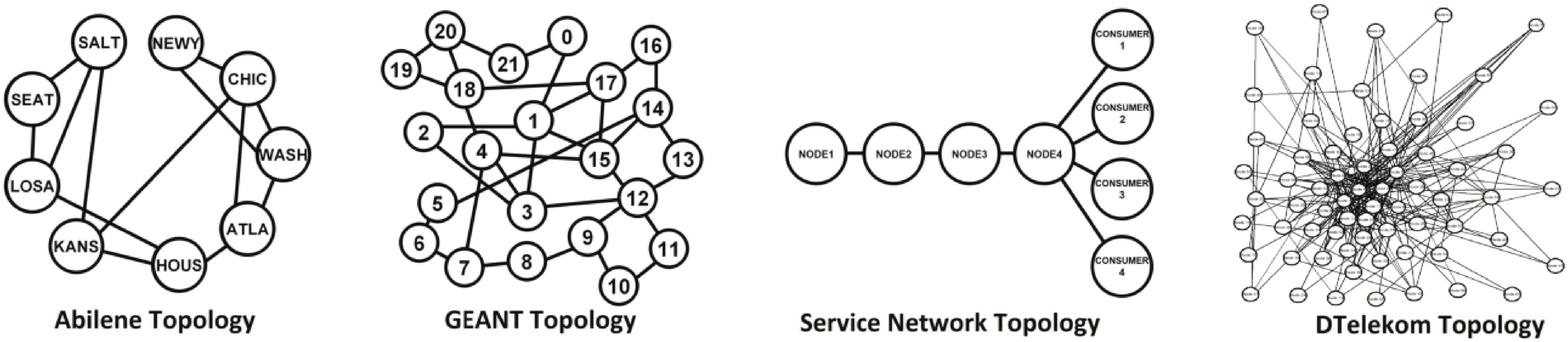}
\caption{Network topologies \cite{VIPICN14}.}
\end{center}
\label{Fig:topo}
\end{figure*}
  \section{Experimental Evaluation}
\label{sec:simulations}
\begin{figure}[t]
\begin{center}
\vspace{-\baselineskip}
\centering
\includegraphics[height=6.8cm,width=8.5cm]{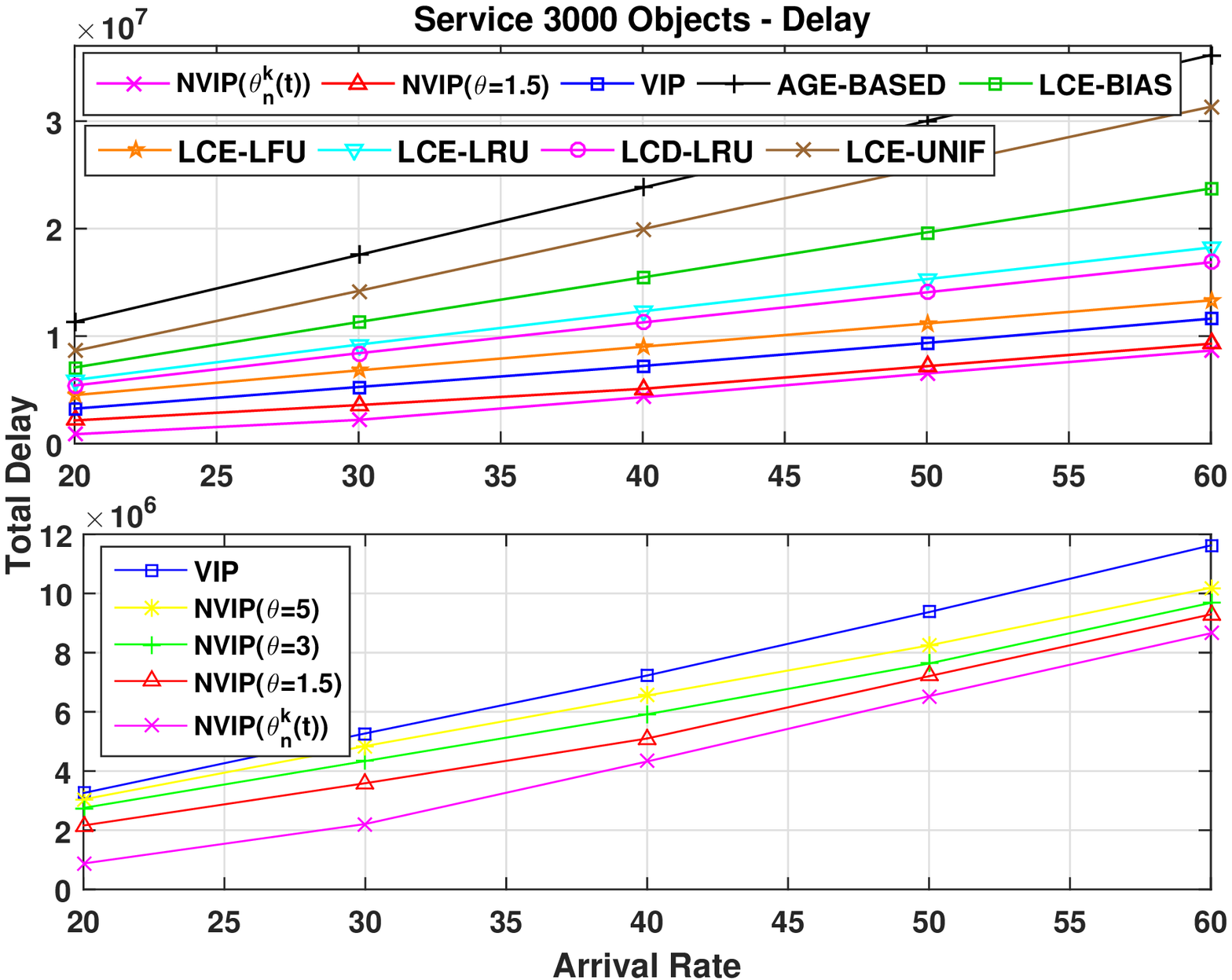}
\caption{Average delay for Service Topology.}\label{fig:service}

\vspace{10pt}
\centering
\includegraphics[height=6.8cm,width=8.5cm]{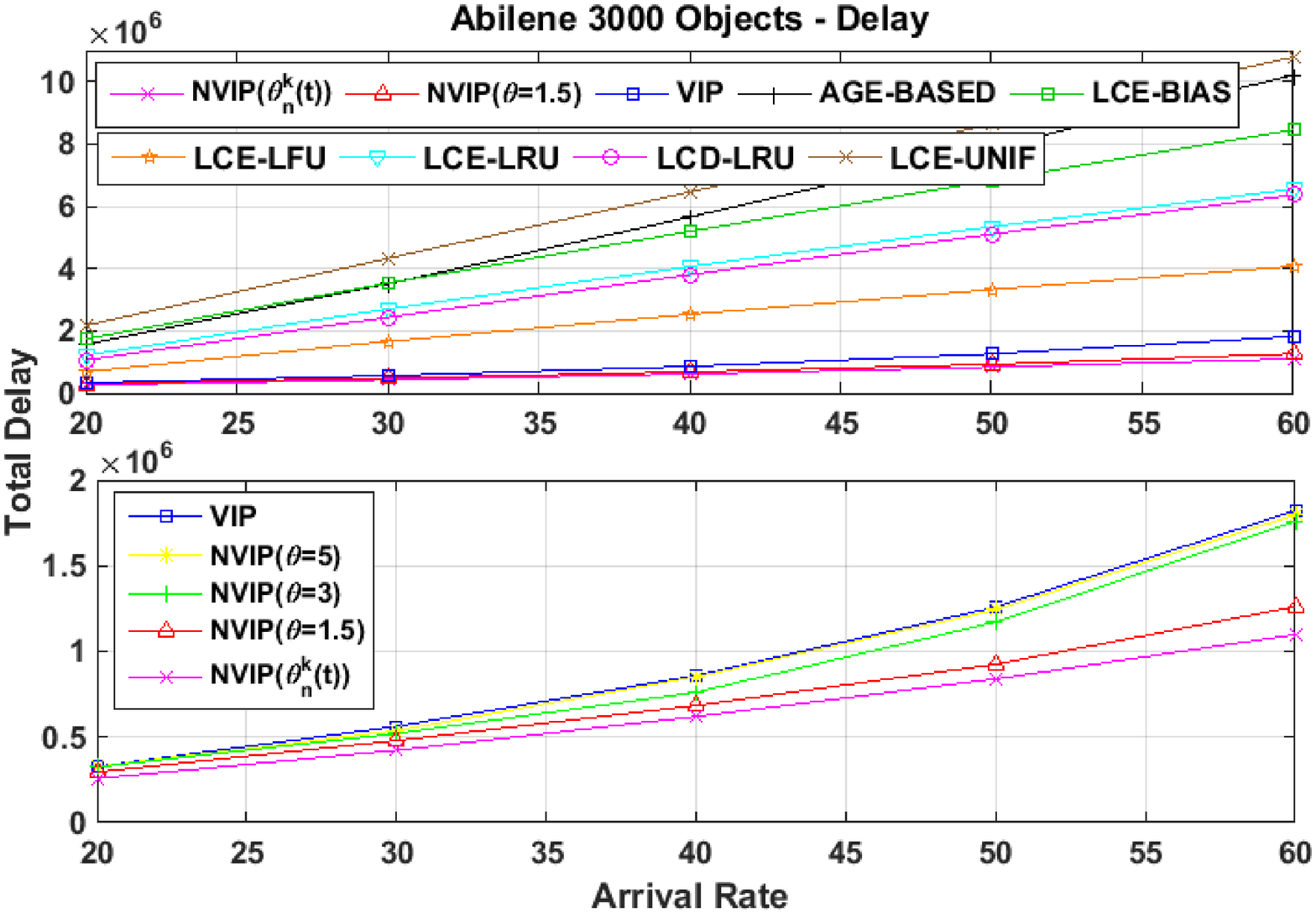}
\caption{Average delay for Abilene Topology.}\label{fig:Abilene}
\end{center}
\end{figure}

%
%
Using the Scaled VIP algorithm in Algorithm~\ref{Alg:VIP}, we can develop a corresponding  stable caching VIP algorithm  for handling Interest Packets and Data Packets in the actual plane using the mapping in \cite{VIPICN14} (please see therein for details). We now compare the delay  performance of  the   new VIP algorithm for the actual plane  resulting from Algorithm~\ref{Alg:VIP}, denoted by NVIP,  with the performance of the VIP algorithm for the actual plane resulting from Algorithm~1 in \cite{VIPICN14}, denoted by VIP, and with the performance of six other baseline algorithms. These baseline algorithms use popular caching algorithms (LFU, LCE-UNIF, LCE-LRU, LCD-LRU, and LCE-BIAS)  in conjunction with shortest path forwarding and a potential-based forwarding algorithm. The detailed descriptions of these baseline algorithms can be found  in \cite{VIPICN14}.

\begin{figure}[h]
\begin{center}
\vspace{-\baselineskip}
\centering
\includegraphics[height=6.8cm,width=8.5cm]{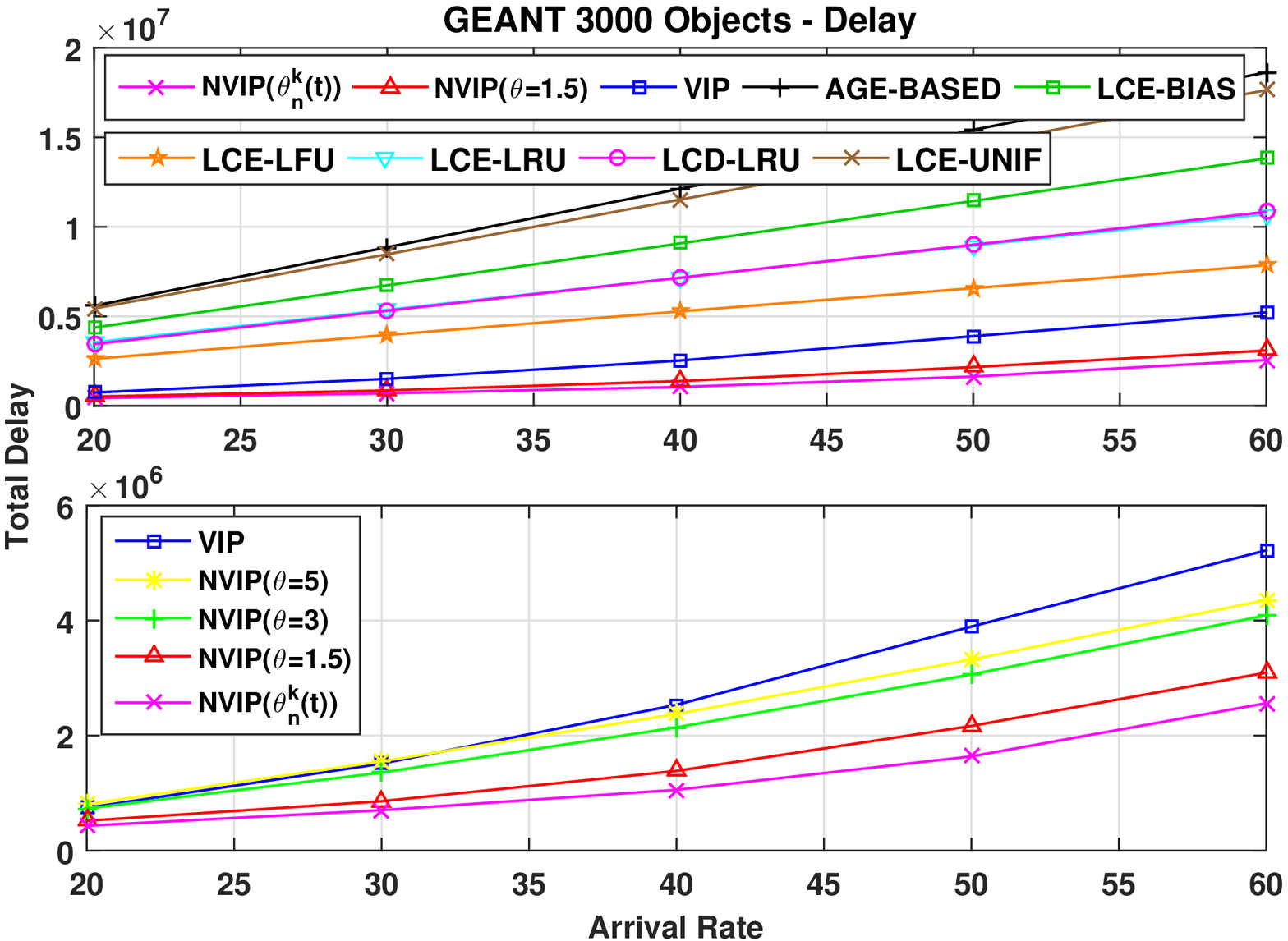}
\caption{Average delay for GEANT Topology.}\label{fig:GEANT}

\vspace{10pt}
\centering
\includegraphics[height=6.8cm,width=8.5cm]{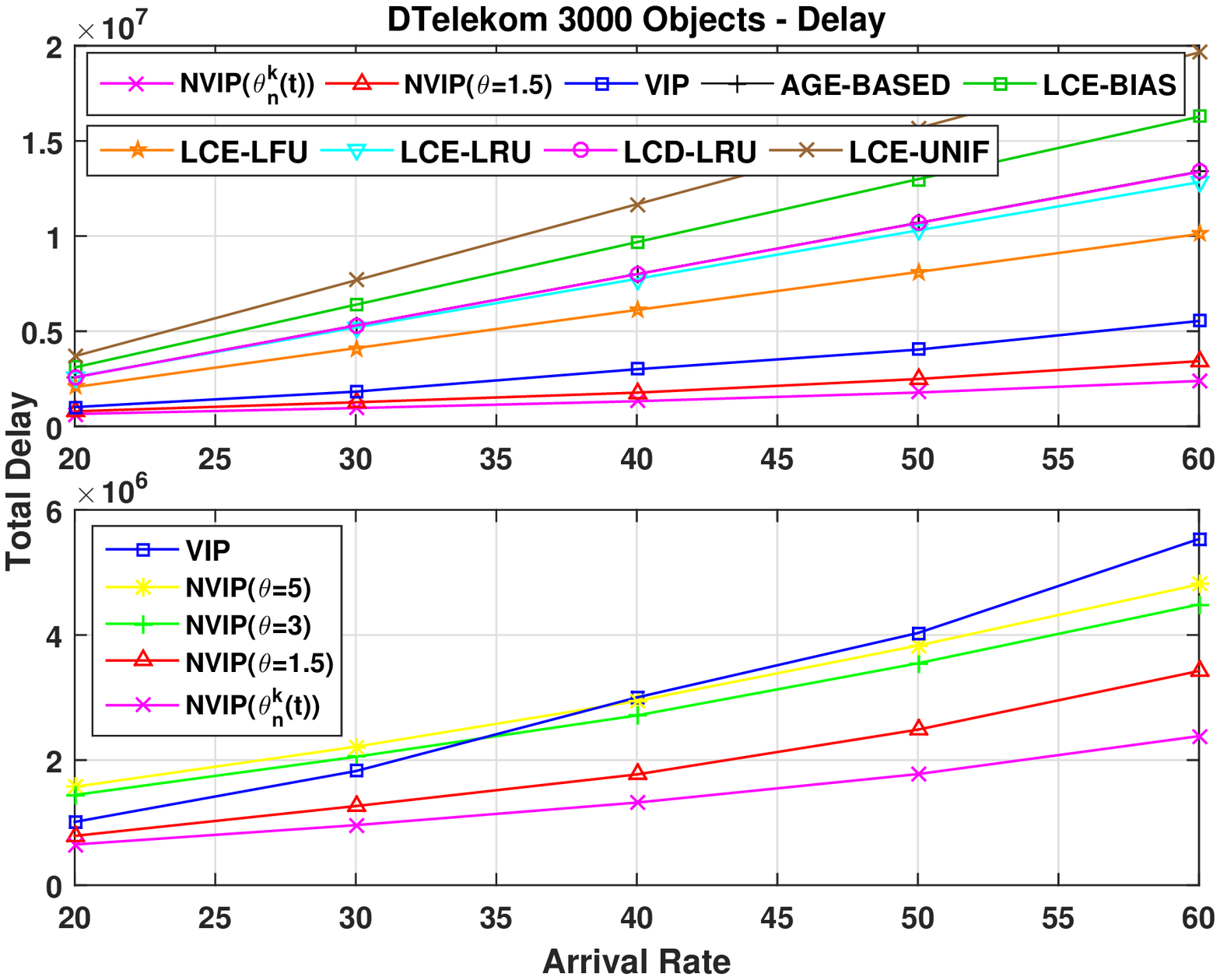}
\caption{\small{Average delay for DTelekom Topology.}}\label{fig:DTelekom}
\end{center}
\end{figure}
We consider two ways of choosing $\theta_n^k$. In the first way, for simplicity, we choose $\theta_n^k=\theta$  for all $k\in\mathcal K$ and $n\in\mathcal N$. In the second way, we choose $\theta_n^k$ to be a moving average of $A^k_n(t)+
\sum_{a\in \mathcal N}\mu^{k}_{an}(t)$ for all $k\in\mathcal K$ and $n\in\mathcal N$. In particular, we consider an exponential moving average (EMA), and  update  $\theta_n^k(t)$ according to  $\theta_n^k(t) = (1 - \beta)\theta_n^k(t-1)+\beta\left(A_n^k(t)+\sum_{a\in \mathcal N}\mu_{an}^k(t)\right)$
at each time $t$, where $\beta $ is set to be 0.125 in the simulation.


Experimental scenarios are carried on four network topologies: the Service Topology, the Abilene Topology, the GEANT Topology and the DTelekom Topology, as shown in Fig.~\ref{Fig:topo}. In the Service Topology, NODE 1 is the content source for all objects, requests can be generated only by the CONSUMER nodes, and  the cache size  at each node is 5 $GB$. The cache size at each node is 5 $GB$ in  the Abilene Topology and is 2 $GB$ in the  GEANT Topology and the DTelekom Topology. In the Abilene Topology, the GEANT Topology and the DTelekom Topology, object requests can be generated by any node, and the content source for each data object is independently and uniformly distributed among all nodes.  At each node requesting data, object requests arrive according to a Poisson process  with an overall rate $\lambda$ (in requests/node/slot). Each arriving request requests data object $k$ (independently) with probability $p_k$, where $\{p_k\}$ follows a (normalized) Zipf distribution with parameter 0.75.
We choose $K=3000$, $C_{ab}=500$ Mb/slot, and $D=5$ MB.  The Interest Packet size is 125B, and the Data Packet size is 50 KB.  Each simulation generates requests  for  $10^4$ time slots.  Each curve   is obtained by averaging over 10 simulation runs.
The delay  for an Interest Packet request is the difference  (in time slots) between the fulfillment
time (i.e., time of arrival of the requested Data Packet) and the creation time
of the Interest Packet request. We use the total delay for all the Interest Packets generated over $10^4$ time slots as the delay measure.


%
%
%
%
%
%
%
%
%
%

Fig.~\ref{fig:service}--Fig.~\ref{fig:DTelekom} illustrate the delay performance for the four topologies. From these figures, we can observe that the VIP algorithm in \cite{VIPICN14}  and the proposed new VIP algorithms with different scaling constants $\theta$ achieve much better delay performance than the six baseline schemes. In addition,
 the proposed new VIP algorithms with $\theta=1$ and EMA $\theta_n^k(t)$  significantly improve the  delay performance of the VIP algorithm  (e.g.,
 $58\%$ at $\lambda=30$ in Service, $40\%$ at   $\lambda=60$ in Abilene, $58\%$ at   $\lambda=40$ in GEANT and $57\%$ at $\lambda=60$ in DTelekom with EMA $\theta_n^k(t)$), indicating the advantage of the proposed VIP scaling in reflecting the interest suppression in the actual network. Note that a large $\theta$ may overemphasize interest suppression and underestimate measured demand, leading to performance degradation.

\section{Conclusion}


In this paper, we aimed to further improve the  performance of the existing VIP algorithms by using a modified virtual plane where VIP counts are appropriately scaled to reflect interest suppression effects.
We characterized the scaled VIP stability region in the virtual plane,
developed new distributed forwarding and caching algorithms,
and proved the throughput optimality of the proposed algorithms in the virtual plane.
 Numerical experiments demonstrate significantly enhanced performance relative to the existing VIP algorithm, as well as a number of other baseline algorithms.

\section*{Appendix A: Proof of Theorem~\ref{thm:stability}}

The proof of Theorem \ref{thm:stability} involves showing that $\boldsymbol\lambda\in\Lambda$ is necessary for stability and that $\boldsymbol \lambda\in\text{int}(\Lambda)$ is sufficient for stability.
First, we show $\boldsymbol \lambda\in\Lambda$ is necessary for stability.  Suppose the network under arrival rate  $\boldsymbol \lambda$ is stabilizable by some feasible forwarding and caching policy. Let $F_{ab}^k(t)$ denote the number of VIPs for object $k$ transmitted over link $(a,b)$ during slot $t$, satisfying
\begin{align}
&F_{ab}^k(t)\geq 0, \ F_{nn}^k(t)=0, \ F_{src(k)n}^k(t)=0, \quad\nonumber\\
&\hspace{40mm} \forall a,b, n\in \mathcal N, k\in \mathcal K\label{eqn:proof-stability-region-F1}\\
&F_{ab}^k(t)=0, \quad \forall a,b\in \mathcal N, k\in \mathcal K, (a,b)\not \in \mathcal L^k\label{eqn:proof-stability-region-F2}\\
&\sum_{k\in \mathcal K} F_{ab}^k(t)\leq C_{ba}/z, \quad \forall (a,b)\in \mathcal L\label{eqn:proof-capacity-const}
\end{align}
For any slot $\tilde t$, we can define $f_{ab}^k=\sum_{\tau=1}^{\tilde t}F_{ab}^k(\tau)/\tilde t$.
Thus, by \eqref{eqn:proof-stability-region-F1}, \eqref{eqn:proof-stability-region-F2}, and \eqref{eqn:proof-capacity-const}, we can prove \eqref{eqn:stability-region-f1}, \eqref{eqn:stability-region-f2}, and \eqref{eqn:stability-region-capacity}, separately.
Let $S_n^k(t)$ denote the caching state of object $k$ at node $n$ during slot $t$, which satisfies
\begin{align}
&S_n^k(t)\in \{0,1\}, \quad \forall n\in \mathcal N, k\in \mathcal K\label{eqn:stability-region-S}
\end{align}
Define \footnote{Note that $\mathcal T_{n,i,l}\cap \mathcal T_{n,j,m}=\emptyset$ for all $(i,l)\neq (j,m)$ for all $n\in \mathcal N$.}
\begin{align}
\mathcal T_{n,i,l}=\bigg\{\tau\in \{1,\cdots, \tilde t\}:&S_n^k(\tau)=1 \ \forall k\in \mathcal B_{n,i,l}, \nonumber\\ &S_n^k(\tau)=0  \ \forall k\not\in \mathcal B_{n,i,l} \bigg\}\nonumber
\end{align}
for $i=1,\cdots, {K \choose l}$ and $l=0,\cdots, i_n$. Define $\beta_{n,i,l}=T_{n,i,l}/\tilde t$, where  $T_{n,i,l}=|\mathcal T_{n,i,l}|$. Thus, we can prove \eqref{eqn:stability-region-beta} and  \eqref{eqn:stability-region-cache}. It remains to prove \eqref{eq:sink}.
By Lemma 1 of \cite{Neely-Modiano-Rohrs:2005}, network stability implies there exists a finite $M$ such that $V_n^k(t)\leq M$ for all $n\in \mathcal N$ and $k\in \mathcal K$ holds infinitely often. Given an arbitrarily small value $\epsilon >0$, there exists a slot $\tilde t$ such that
\begin{align}
V_n^k(\tilde t)\leq M,\quad 
\frac{M}{\tilde t}\leq \epsilon, \quad
\left|\frac{\sum_{\tau=1}^{\tilde t}A_n^k(\tau)}{\tilde t}-\lambda_n^k\right|\leq \epsilon\label{eqn:proof-queue-bound}
\end{align}
In addition, since for all slot $t$, the queue length is equal to the difference between the total VIPs that have arrived and departed as well as drained, assuming $V_n^k(1)=0$, we have
\begin{align}
&\sum_{\tau=1}^t \dfrac{A_{n}^k(\tau)}{\theta_n^k}-V_n^k(t)\nonumber\\
\leq&\sum_{\tau=1}^t\sum_{b\in \mathcal N} F_{nb}^k(\tau)-\sum_{\tau=1}^t\sum_{a\in \mathcal N} F_{an}^k(\tau)/\theta_n^k+r_n\sum_{\tau=1}^tS_n^k(\tau)\label{eqn:proof-queue-const}
\end{align}
Thus,
by \eqref{eqn:proof-queue-bound} and \eqref{eqn:proof-queue-const},  we have
\begin{align}
&\lambda_n^k-\epsilon\leq \frac{1}{\tilde t}\sum_{\tau=1}^{\tilde t} A_{n}^k(\tau)\nonumber\\
\leq&\theta_n^k\Big(\frac{1}{\tilde t}V_n^k(\tilde t)+\frac{1}{\tilde t}\sum_{\tau=1}^{\tilde t} \sum_{b\in \mathcal N} F_{nb}^k(\tau)-\frac{1}{\tilde t}\sum_{\tau=1}^{\tilde t} \sum_{a\in \mathcal N} \dfrac{F_{an}^k(\tau)}{\theta_n^k}\nonumber\\
&+r_n\frac{1}{\tilde t}\sum_{\tau=1}^{\tilde t} S_n^k(\tau)\Big)\label{eqn:proof-stability-region-lambda}
\end{align}
Since $ \sum_{\tau=1}^{\tilde t} S_n^k(\tau)=
\sum_{l=0}^{i_n}\sum_{i=1}^{{K \choose l}} T_{n,i,l}\mathbf 1[k\in \mathcal B_{n,i,l}]$, by
\eqref{eqn:proof-stability-region-lambda}, we have
$$\lambda_n^k\leq \theta_n^k\sum_{b\in \mathcal N}f^k_{nb}-\sum_{a\in \mathcal N}f^k_{an}+\theta_n^k r_n\sum_{l=0}^{i_n}\sum_{i=1}^{{K \choose l}} \beta_{n,i,l}\mathbf 1[k\in \mathcal B_{n,i,l}]+\theta_n^k\epsilon + \epsilon.$$
By letting $\epsilon\to 0$, we can prove \eqref{eq:sink}.

Next, we show $\boldsymbol \lambda\in\text{int}(\Lambda)$ is sufficient for stability. $\boldsymbol \lambda\in\text{int}(\Lambda)$ implies that there exists $\boldsymbol \epsilon=\left(\epsilon_n^k\right)$, where $\epsilon_n^k>0$, such that $\boldsymbol \lambda+\boldsymbol \epsilon\in \Lambda$. Let $\left(f_{ab}^k\right)$ and $\left(\boldsymbol \beta_n\right)$ denote the  flow variables  and storage variables associated with arrival rates $\boldsymbol \lambda+\boldsymbol \epsilon$. Thus, \eqref{eqn:stability-region-f1}, \eqref{eqn:stability-region-f2}, \eqref{eqn:stability-region-beta}, \eqref{eqn:stability-region-capacity}, \eqref{eqn:stability-region-cache}, and
\begin{align}
&\lambda_n^k+\epsilon_n^k\nonumber\\
\leq& \theta_n^k\sum_{b\in \mathcal N}f^k_{nb}-\sum_{a\in \mathcal N}f^k_{an}+\theta_n^k r_n\sum_{l=0}^{i_n}\sum_{i=1}^{{K \choose l}} \beta_{n,i,l}\mathbf 1[k\in \mathcal B_{n,i,l}],\nonumber\\
&\hspace{30mm}\forall n\in \mathcal N, k\in \mathcal K, n\neq src(k)\label{eqn:stability-region-lambda-epsilon}
\end{align}
hold. We now construct the randomized forwarding policy. For every link $(a,b)$ such that $\sum_{k\in \mathcal K}f_{ab}^k>0$, transmit the VIPs of the single object $\tilde k_{ab}$, where  $\tilde k_{ab}$ is chosen randomly to be $k$ with probability $f_{ab}^k/\sum_{k\in \mathcal K}f_{ab}^k$. Then, the number of VIPs that can be transmitted in slot $t$ is as follows:
\begin{align}
\tilde \mu_{ab}^k(t)=
\begin{cases}
\sum_{k\in \mathcal K}f_{ab}^k, & \text{if  $k=\tilde k_{ab}$} \\
0, & \text{otherwise}
\end{cases}
\end{align}
Null bits are delivered if there are not enough bits in a queue. For every link $(a,b)$ such that $\sum_{k\in \mathcal K}f_{ab}^k=0$, choose $\tilde \mu_{ab}^k(t)=0$ for all $k\in\mathcal K$. Thus, we have
\begin{align}
\mathbb E\left[\tilde \mu_{ab}^k(t)\right]=f_{ab}^k\label{eqn:rand-mu-ave}
\end{align}
Next, we construct the randomized caching policy. For every node $n$, cache the single combination $\tilde{\mathcal B}_n$, where $\tilde{\mathcal B}_n$ is chosen randomly to be $\mathcal B_{n,i,l}$ with probability $\beta_{n,i,l}/\sum_{l=0}^{i_n}\sum_{i=1}^{{K \choose l}} \beta_{n,i,l}=\beta_{n,i,l},$ as $\sum_{l=0}^{i_n}\sum_{i=1}^{{K \choose l}} \beta_{n,i,l}=1$ by \eqref{eqn:stability-region-cache}. Then, the caching state in slot $t$ is as follows:
\begin{align}
\tilde s_n^k(t)=
\begin{cases}
1, & \text{if  $k\in \tilde{\mathcal B}_n$} \\
0, & \text{otherwise}
\end{cases}
\end{align}
Thus, we have
\begin{align}
\mathbb E\left[\tilde s_n^k(t)\right]=\sum_{l=0}^{i_n}\sum_{i=1}^{{K \choose l}} \beta_{n,i,l}\mathbf 1[k\in \mathcal B_{n,i,l}] \label{eqn:rand-s-ave}
\end{align}
Therefore, by \eqref{eqn:rand-mu-ave}, \eqref{eqn:rand-s-ave} and \eqref{eqn:stability-region-lambda-epsilon}, we have
\begin{align}
&\mathbb E\left[\left(\theta_n^k\sum_{b\in \mathcal N}\tilde{\mu}^k_{nb}(t)-\sum_{a\in \mathcal N}\tilde{\mu}^k_{an}(t)+\theta_n^k r_n^{(k)}\tilde{s}_n^k(t)\right)\right]\nonumber\\
=&\theta_n^k\sum_{b\in \mathcal N}f^k_{nb}-\sum_{a\in \mathcal N}f^k_{an}+\theta_n^k r_n\sum_{l=0}^{i_n}\sum_{i=1}^{{K \choose l}} \beta_{n,i,l}\mathbf 1[k\in \mathcal B_{n,i,l}]\nonumber\\
\geq &\lambda_n^k+\epsilon_n^k\label{eqn:rand-policy-ineq}
\end{align}
In other words, the arrival rate is less than the service rate.
Thus, by Loynes' theorem, we can show that the network is stable.

\section*{Appendix B: Proof of Theorem~\ref{Thm:thpt-opt}}

Define the quadratic Lyapunov function $\mathcal L(\mathbf V)\triangleq\\
\sum_{n\in \mathcal N,k\in \mathcal K}(V^k_n)^2$. The Lyapunov drift at slot $t$ is given by
$\Delta (\mathbf V(t))\triangleq \mathbb E[\mathcal L\big(\mathbf
V(t+1)\big)-\mathcal L\big(\mathbf V(t)\big)|\mathbf V(t)]$. First, we calculate $\Delta (\mathbf V(t))$.
Taking square on both sides of \eqref{eqn:queue_dyn}, we have
\begin{align}
&\left(V^k_n(t+1)\right)^2\nonumber\\
\leq&\Bigg(\Bigg(
\left(V^k_n(t)-\sum_{b\in \mathcal N}\mu^k_{nb}(t)\right)^+ +\dfrac{A^k_n(t)}{\theta_n^k}
\nonumber\\
&+\sum_{a\in \mathcal N}\mu^k_{an}(t)/\theta_n^k-r_ns_n^k(t)\Bigg)^+\Bigg)^2\nonumber\\
\leq&\Bigg(
\left(V^k_n(t)-\sum_{b\in \mathcal N}\mu^k_{nb}(t)\right)^+ +\dfrac{A^k_n(t)}{\theta_n^k}\nonumber\\
&+\sum_{a\in \mathcal N}\mu^k_{an}(t)/\theta_n^k-r_ns_n^k(t)\Bigg)^2\nonumber\\
\leq&\left(V^k_n(t)-\sum_{b\in \mathcal N}\mu^k_{nb}(t)\right)^2+2
\left(V^k_n(t)-\sum_{b\in \mathcal N}\mu^k_{nb}(t)\right)^+\nonumber\\
&\times\left(\dfrac{A^k_n(t)}{\theta_n^k}
+\sum_{a\in \mathcal N}\dfrac{\mu^k_{an}(t)}{\theta_n^k}-r_ns_n^k(t)\right)\nonumber\\
&+\left(\dfrac{A^k_n(t)}{\theta_n^k}
+\sum_{a\in \mathcal N}\dfrac{\mu^k_{an}(t)}{\theta_n^k}-r_ns_n^k(t)\right)^2
\nonumber\\
=&\left(V^k_n(t)\right)^2+\left(\sum_{b\in \mathcal N}\mu^k_{nb}(t)\right)^2-2 V^k_n(t)\sum_{b\in \mathcal N}\mu^k_{nb}(t) \nonumber\\
&+\left(\dfrac{A^k_n(t)}{\theta_n^k}+\sum_{a\in \mathcal N}\dfrac{\mu^k_{an}(t)}{\theta_n^k}-r_n^ks_n^k(t)\right)^2\nonumber\\
&+2\left(V^k_n(t)-\sum_{b\in \mathcal N}\mu^k_{nb}(t)\right)^+\left(\dfrac{A^k_n(t)}{\theta_n^k}
+\sum_{a\in \mathcal N}\dfrac{\mu^k_{an}(t)}{\theta_n^k}\right)\nonumber\\
&-2\left(V^k_n(t)-\sum_{b\in \mathcal N}\mu^k_{nb}(t)\right)^+r_n^ks_n^k(t)\nonumber\\
\leq&\left(V^k_n(t)\right)^2+\left(\sum_{b\in \mathcal N}\mu^k_{nb}(t)\right)^2-2 V^k_n(t)\sum_{b\in \mathcal N}\mu^k_{nb}(t) \nonumber\\
&+\left(\dfrac{A^k_n(t)}{\theta_n^k}+\sum_{a\in \mathcal N}\dfrac{\mu^k_{an}(t)}{\theta_n^k}+r_n^ks_n^k(t)\right)^2\nonumber\\
&+2V^k_n(t)\left(\dfrac{A^k_n(t)}{\theta_n^k}
+\sum_{a\in \mathcal N}\dfrac{\mu^k_{an}(t)}{\theta_n^k}\right)\nonumber\\
&-2\left(V^k_n(t)-\sum_{b\in \mathcal N}\mu^k_{nb}(t)\right)r_n^ks_n^k(t)\nonumber\\
\leq&\left(V^k_n(t)\right)^2+\left(\sum_{b\in \mathcal N}\mu^k_{nb}(t)\right)^2+2\sum_{b\in \mathcal N}\mu^k_{nb}(t)r_ns_n^k(t)\nonumber\\
&+\left(\dfrac{A^k_n(t)}{\theta_n^k}+\sum_{a\in \mathcal N}\dfrac{\mu^k_{an}(t)}{\theta_n^k}+r_ns_n^k(t)\right)^2\nonumber\\
&+2V^k_n(t)\dfrac{A^k_n(t)}{\theta_n^k}-2V^k_n(t)\left(\sum_{b\in \mathcal N}\mu^k_{nb}(t)-\sum_{a\in \mathcal N}\dfrac{\mu^k_{an}(t)}{\theta_n^k}\right)\nonumber\\
&-2V^k_n(t)r_ns_n^k(t)\nonumber
\end{align}
Summing over all $n,k$, we have
\begin{align}
&\mathcal L\left(\mathbf V(t+1)\right)-\mathcal L\left(\mathbf V(t)\right)\nonumber\\
\stackrel{(a)}{\leq}&2N B+2\sum_{n\in \mathcal N,k\in \mathcal K}V^k_n(t)\dfrac{A^k_n(t)}{\theta_n^k}\nonumber\\
&-2\sum_{(a,b)\in \mathcal L}\sum_{k\in \mathcal K}\mu_{ab}^k(t)(V_a^k(t)-\dfrac{V_b^k(t)}{\theta_b^k})\nonumber\\
&-2\sum_{n\in \mathcal N, k\in \mathcal K}V_n^k(t)r_n s_n^k(t)
\label{eqn:proof-deltaL}
\end{align}
where (a) is due to the following:
\begin{align}
&\sum_{k\in \mathcal K}\left(\sum_{b\in \mathcal N}\mu^k_{nb}(t)\right)^2\leq \left(\sum_{k\in \mathcal K}\sum_{b\in \mathcal N}\mu^k_{nb}(t)\right)^2\leq\left(\mu^{out}_{n, \max}\right)^2,\nonumber\\
& \sum_{k\in \mathcal K}\left(\dfrac{A^k_n(t)}{\theta_n^k}+\sum_{a\in \mathcal N}\dfrac{\mu^k_{an}(t)}{\theta_n^k}+r_ns_n^k(t)\right)^2\nonumber\\
&\leq \left(\sum_{k\in \mathcal K}\left(\dfrac{A^k_n(t)}{\theta_n^k}+\sum_{a\in \mathcal N}\dfrac{\mu^k_{an}(t)}{\theta_n^k}+r_ns_n^k(t)\right)\right)^2\nonumber\\
 &\leq (A_{n,\max}/\theta_{n,\min}+\mu^{in}_{n,\max}/\theta_{n,\min}+r_{n,\max})^2, \nonumber\\
&\sum_{k\in \mathcal K}\sum_{b\in \mathcal N}\mu^k_{nb}(t)r_ns_n^k(t)\nonumber\\
&\leq \left(\sum_{k\in \mathcal K}\sum_{b\in \mathcal N}\mu^k_{nb}(t)\right)\left(\sum_{k\in \mathcal K}r_ns_n^k(t)\right)\leq \mu^{out}_{n,
\max}r_{n,\max},\nonumber\\
&\sum_{n\in \mathcal N,k\in \mathcal K}V^k_n(t)\left(\sum_{b\in\mathcal N}\mu^k_{nb}(t)-\dfrac{1}{\theta_n^k}\sum_{a\in \mathcal N}\mu^k_{an}(t)\right)\nonumber\\
&=\sum_{(a,b)\in \mathcal L}\sum_{k\in \mathcal K}
\mu^k_{ab}(t)\big(V^k_a(t)-V^k_b(t)/\theta_b^k\big).\nonumber
\end{align}
Taking conditional
expectations on both sides of \eqref{eqn:proof-deltaL}, we have
\begin{align}
&\Delta (\mathbf
V(t))\nonumber\\
\leq&2N B+2\sum_{n\in \mathcal N,k\in \mathcal K}V^k_n(t)\dfrac{\lambda^k_n}{\theta_n^k}\nonumber\\
&-2\mathbb
E\left[\sum_{(a,b)\in \mathcal L}\sum_{k\in \mathcal K}
\mu^k_{ab}(t)\left(V^k_a(t)-\dfrac{V^{k}_b(t)}{\theta_b^k}\right)|\mathbf V(t)\right]\nonumber\\
&-2\mathbb E\left[\sum_{n\in \mathcal N,k\in \mathcal K}V^k_n(t)r_n s_n^k(t)|\mathbf V(t)\right]\nonumber\\
\stackrel{(b)}{\leq}&2N B+2\sum_{n\in \mathcal N,k\in \mathcal K}V^k_n(t)\dfrac{\lambda^k_n}{\theta_n^k}\nonumber\\
&-2\mathbb
E\left[\sum_{(a,b)\in \mathcal L}\sum_{k\in \mathcal K}
\tilde{\mu}^k_{ab}(t)\left(V^k_a(t)-\dfrac{V^k_b(t)}{\theta_b^k}\right)|\mathbf V(t)\right]\nonumber\\
&-2\mathbb E\left[\sum_{n\in \mathcal N,k\in \mathcal K}V^k_n(t)r_n\tilde{s}_n^k(t)|\mathbf V(t)\right]\nonumber\\
=&2N B+2\sum_{n\in \mathcal N,k\in \mathcal K}V^k_n(t)\dfrac{\lambda^k_n}{\theta_n^k}-2\sum_{n\in \mathcal N,k\in \mathcal K}V^k_n(t)\nonumber\\
&\times\mathbb E\left[\left(\sum_{b\in \mathcal N}\tilde{\mu}^k_{nb}(t)-\sum_{a\in \mathcal N}\tilde{\mu}^k_{an}(t)/\theta_n^k+r_n\tilde{s}_n^k(t)\right)|\mathbf
V(t)\right]\label{eqn:proof_ineq0}
\end{align}
where (b) is due to the fact that Algorithm \ref{Alg:VIP} minimizes the
R.H.S. of  (b) over all feasible $\tilde{\mu}^k_{ab}(t)$ and $\tilde{s}_n^k(t)$.\footnote{ Note that $\mu^k_{ab}(t)$ and $s_n^k(t)$ denote the actions of Algorithm \ref{Alg:VIP}.}  Since $\boldsymbol
\lambda+\boldsymbol \epsilon  \in \Lambda$, according to the proof of Theorem \ref{thm:stability}, there exists a stationary randomized forwarding and caching policy that makes decisions
independent of $\mathbf V(t)$ such that
\begin{align}
&\mathbb E\left[\left(\theta_n^k\sum_{b\in \mathcal N}\tilde{\mu}^k_{nb}(t)-\sum_{a\in \mathcal N}\tilde{\mu}^k_{an}(t)+\theta_n^k r_n\tilde{s}_n^k(t)\right)|\mathbf
V(t)\right]\nonumber\\
\geq &\lambda^k_n+\epsilon^k_n\label{eqn:proof_ineq1}
\end{align}
Substituting \eqref{eqn:proof_ineq1}
into \eqref{eqn:proof_ineq0}, we have $\Delta (\mathbf V(t))
\leq 2N B-2\sum_{n\in \mathcal N,k\in \mathcal K} \dfrac{\epsilon^k_n}{\theta_n^k}V^k_n(t)
\leq
2N B-2\epsilon\sum_{n\in \mathcal N,k\in \mathcal K}\dfrac{V^k_n(t)}{\theta_n^k} $.
By  Lemma 4.1 of \cite{Georgiadis-Neely-Tassiulas:2006}, we complete the proof.



\end{document}